\newtheorem{theorem}{Theorem}[section]
\newtheorem{lemma}[theorem]{Lemma}
\newtheorem*{remark}{Remark}
\theoremstyle{definition}
\newcommand{\forceindent}{\leavevmode{\parindent=1em\indent}}
\date{}
\begin{document}
\title{Static Stability Analysis of a Thin Plate with a Fixed Trailing Edge in Axial Subsonic Flow: Possio Integral Equation Approach}
\author{
Mohamed  Serry\\
Amjad Tuffaha}
\date{}
\maketitle

\bigskip

\indent Department of Mathematics and Statistics\\
\indent American University of Sharjah\\
\indent Sharjah, UAE\\
\indent e-mail:  mohamedserry91\char'100gmail.com \\
\indent e-mail: atufaha\char'100aus.edu

\maketitle		

\begin{abstract}
In this work, the static stability of  plates with fixed trailing edges in axial airflow is studied using the framework of Possio integral equation. First, we introduce a new derivation of a  Possio integral equation that relates the pressure jump along thin plates to their downwash based on the linearization of the governing equations of an ideal   compressible fluid. The steady state solution to the Possio equation is used to account for the aerodynamic forces in the steady state plate governing equation resulting in a singular differential-integral equation which is transformed to an integral equation. Next, we verify the solvability of the integral equation based on the Fredholm alternative for compact operators in Banach spaces and the contraction mapping theorem. Then, we derive explicit formulas for the characteristic equations of free-clamped and free-pinned plates. The minimum solutions to the characteristic equations are the divergence speeds which indicate when static instabilities start to occur. We show analytically that free-pinned plates are statically unstable. After that, we move to derive analytically  flow speed intervals  that correspond to static stability regions for free-clamped plates. We also resort to numerical computations to obtain an explicit formula for the divergence speed of free-clamped plates. Finally, we apply the obtained results on piezoelectric plates and we show that free-clamped piezoelectric plates are statically more stable than conventional free-clamped plates due to the piezoelectric coupling.    
\end{abstract}

\section{Introduction}
\forceindent\textit{Aeroelasticity} is a classical subfield of fluid mechanics that is concerned with the interactions between air flow and elastic bodies. Such interactions can have gentle effects such as flag flapping or may result in catastrophic consequences such as the collapse of the Tacoma Narrows Bridge in 1940. Therefore, aeroelasticity is essential in many serious applications such as the design of airplanes, bridges, tall buildings, and so on, to insure  static and dynamic stabilities. Additionally, there has been a recent interest in exploiting aeroelastic instabilities for the purpose of energy harvesting \cite{DSMD}.

 A conventional aeroelastic analysis (see for example \cite{Hodges}) aims to find the flow speeds at which dynamic instabilities (flutter) or static instabilities (divergence) of elastic structures  start to occur. Informally speaking, the minimum speeds at which static and dynamic instabilities start to occur are referred to as \textit{divergence speed} and \textit{flutter speed} respectively. In the design of systems such as airplanes and bridges, we aim to delay the divergence and flutter speeds so that these systems can operate over a wide range of flow speeds without triggering  static or dynamic instabilities. On the other hand, for energy harvesting aeroelastic systems, we seek to minimize the flutter speed as we can harvest more energy from the aeroelastic system when flutter occurs.

Due to the complexity of aeroelastic problems in general,  aeroelastic studies utilize rigorous numerical, experimental, and analytical treatments to insure thorough and accurate understanding of different aeroelastic phenomena. Despite their sophistication and limitations to simple problems, analytical techniques have contributed to the development of the field of aeroelasticity and understanding its aspects thoroughly. A very important example illustrating the effectiveness of analytical techniques in aeroelasticity is the outstanding  work of Theodorsen in the 1950's \cite{Theodorsen} who used tools from complex analysis to derive formulas of the aerodynamic loads on thin airfoils in incompressible airflow. Until now, a significant number of  scientists after Theodorsen have been using his formulas to study different aeroelastic problems (see for example \cite{aeroelastic control1, aeroelastic control2, aeroelastic control3, aeroelastic control4}).

Another important example that is usually overlooked is the interesting work of A.V. Balakrishnan. Balakrishnan implemented rigorous mathematical tools to derive and solve a singular integral equations (known as Possio equations) from which the aerodynamic loads on thin structures in compressible potential flows can be obtained \cite{Balakrishnan solving possio equation}. Balakrishnan, again equipped with rigorous mathematical tools and minimal amount of numerical computations, moved to conducting static and dynamic aeroelastic  analyses on continuum wing structures in normal subsonic flows \cite{iLiff paper,Transonic Dip}. 
In \cite{Axial flow conference paper}, Balakrishnan set a framework for  the analysis of the steady state or transient responses of thin plate in axial flow.

The axial flow over thin plates changes the nature of the fluid-structure interaction, as the plates' deflections vary nonlinearly along  the direction of the flow, and presents new mathematical and computational challenges. Moreover, understanding the axial flow problem has a very promising application in harvesting energy from winds. Therefore, there has been a series of recent mathematical, numerical, and experimental research works on the axial flow over thin plates and its energy harvesting applications  \cite{S,H,GWD, MD}.

A quick look at the literature on the axial flow over thin plates shows that most of the conducted studies consider the dynamic responses and instabilities. On the other hand, the number of  research papers on the static stabilities of plates in axial flow is minute (see for example \cite{static instability 1,static stability 2}).

Motivated and inspired by the work of Balakrishnan, the significance and challenges in analyzing axial flows, and the limited research on the static instabilities of plates in axial flow, we propose in this paper a framework to study the static instabilities of thin plates in axial air flow. We formulate the static aeroelastic equation of a thin plate based on the steady state solution to a Possio integral equation. The Possio equation is derived based on a linearization of the flow equations of an ideal compressible fluid. We then verify the existence and uniqueness of solutions to the static aeroelastic equation without a consideration of the boundary conditions and that makes our framework applicable for different axial flow problems with different boundary conditions. We derive characteristic equations explicitly for the  cases of free-pinned and free-clamped plates. The minimum solutions to the characteristic equations are the divergence speeds. After that, we analyze and solve the characteristic equations analytically and numerically. We show that the divergence speed for free-pinned plates is zero and that indicates that free-pinned plates are statically unstable. We also obtain analytically a flow velocity interval that guarantees static stability of free-clamped plates. Moreover, we derive an explicit formula for the divergence speed of free-clamped plates based on a numerical solution to the characteristic equation. Finally, we apply the previous results on piezoelectric plates and we show that free-clamped piezoelectric plates are statically more stable than conventional free-clamped plates due to piezoelectric coupling.

\section{Preliminaries}
In this section, we state some mathematical definitions and results that will be used throughout this work. We refer to any standard functional analysis book to have a thorough understanding of the stated results and definitions. The symbols $\mathbb{N}$, $\mathbb{R}$, and $\mathbb{C}$ denote the real and complex numbers respectively. 
\begin{itemize}
\item
The Banach space $L^{p}[-b,b]$ with $p\geq 1$ is the space of functions $f:[-b,b] \rightarrow \mathbb{R}$ satisfying the property $\int_{-b}^{b}|f(x)|^{p}dx<\infty$. The notation $L^{p_0+}[-b,b]$ indicates any space $L^{p}[-b,b]$ with $p>p_0$ while $L^{p_0-}[-b,b]$ indicates any space $L^{p}[-b,b]$ with $1\leq p<p_0$. $L^{p}[-b,b]$ is equipped with the norm $||*||_{L^{p}[-b,b]}$  defined as  $||f||_{L^{p}[-b,b]}=\left(\int_{-b}^{b}|f(x)|^{p}dx\right)^{1/p}~,f \in L^{p}[-b,b]$. 

\item
Let $f\in L^{p}[-b,b]$ and $g\in L^{q}[-b,b]$ such that $1/p+1/q=1$ and $p,q\geq 1$. Then, H\" older's inequality states that $\int_{-b}^{b}|fg|dx\leq ||f||_{L^{p}}||g||_{L^{q}}$.

\item The Banach space $C[-b,b]$ is the space of continuous functions $f:[-b,b] \rightarrow \mathbb{R}$. $C[-b,b]$ is equipped with the norm $||*||_{C[-b,b]}$  defined as
$||f||_{C[-b,b]}=\sup_{x\in[-b,b]}|f(x)|,~f \in C[-b,b]$. 

\item Let $\{f_n\}$ be a bounded sequence in $C[-b,b]$ that is equi-continuous (That means the value $|f_{n}(y)-f_{n}(x)|$ can be set  arbitrarily small by only setting the value of $|x-y|$ sufficiently small with a value independent of $n$). Then, by  Arzela-Ascoli theorem, there exists a subsequence of $\{f_n\}$ that is convergent.
\item A function $f:R\subset \mathbb{R}^n\rightarrow \mathbb{R}$ is called uniformly continuous if, informally speaking, the value $|f(x)-f(y)|$ can be set arbitrarily small by setting the value of $||x-y||$ sufficiently small independent of the value of $x$ and $y$.
\item
Let $X$, $Y$ and $Z$ be Banach spaces. Let $T:X\rightarrow Y$ be a linear operator. If there exists a constant $c$ such that, for all $x\in X$, $||T(x)||_{Y}\leq c ||x||_{X}$, then $T$ is called a bounded operator. The smallest value of $c$, such that the above inequality holds, is called the operator norm of $T$ and is denoted by $||T||$. If $||T||<1$, then $T$ is called a contraction mapping.

\item Let $T:X\rightarrow Y$ and $G:Y\rightarrow Z$ be bounded operators. Then, the operator composition $GT:X\rightarrow Z$ is bounded with an operator norm estimated by $||GT||\leq||G||~||T||$. 

\item Let $T:X\rightarrow X$ be a contraction mapping and $I:X \rightarrow X$ be the identity operator. Then, $I-T$ is invertible with an inverse given by $\sum_{n=0}^{\infty} T^n$ where $T^n$ is the composition of $T$ with itself $n$ times. 

\item Let $T:X\rightarrow Y$ be a linear operator. If for every bounded sequence $\{x_n\}$ in $X$, the sequence $\{T(x_n)\}$ has a convergent subsequence, then $T$ is called a compact operator. Note that every compact operator is bounded. 

\item
Let $T:X\rightarrow Y$ and $G:Y \rightarrow Z$ be bounded operators. If either $T$ or $G$ is compact, then the operator composition  $GT:X\rightarrow Z$ is compact. 

\item Let $T:X\rightarrow X$ be a linear operator, then the point spectrum set of $T$, denoted by $\sigma_{p}(T)$, is defined as the set of eigenvalues $\lambda \in \mathbb{C}$ of the operator $T$. 

 \item Let $T:X\rightarrow X$ be a compact operator, then  $\sigma(T)$ is at most countable (a set is countable if there exists a bijective mapping from that set to $\mathbb{N}$).
 
 \item The Fredholm alternative states that  if $T:X\rightarrow X$ is a compact operator then the  operator $\mu I-T$, where $I:X \rightarrow X$ is the identity operator and $\mu  \in \mathbb{C}/\{0
 \}$, is invertible whenever $\mu \notin \sigma_{p}(T)$.
\end{itemize}

\section{Problem Description}

\forceindent A slender plate with length $2b$  and width $l$ is placed in an axial compressible flow with a free stream velocity $U$ in the length direction of the plate (see figure \eqref{fig:plate configuration}).
 \begin{figure}
\includegraphics[width=5in ,keepaspectratio=true]{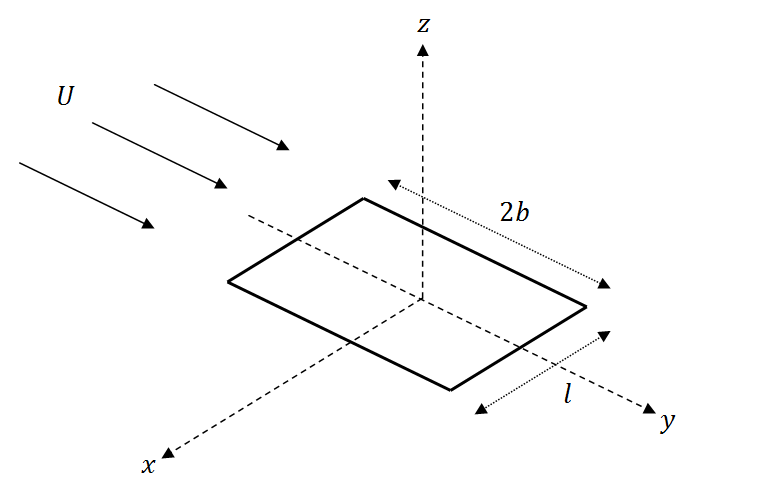}
\centering
\caption{plate configuration in an axial flow}
\label{fig:plate configuration}
\end{figure}
 If the plate leading edge is free and the trailing edge is pinned (no reaction moment), then we call that configuration free-pinned (see figure \eqref{fig:free pinned}). 
 \begin{figure}
\includegraphics[width=5in ,keepaspectratio=true]{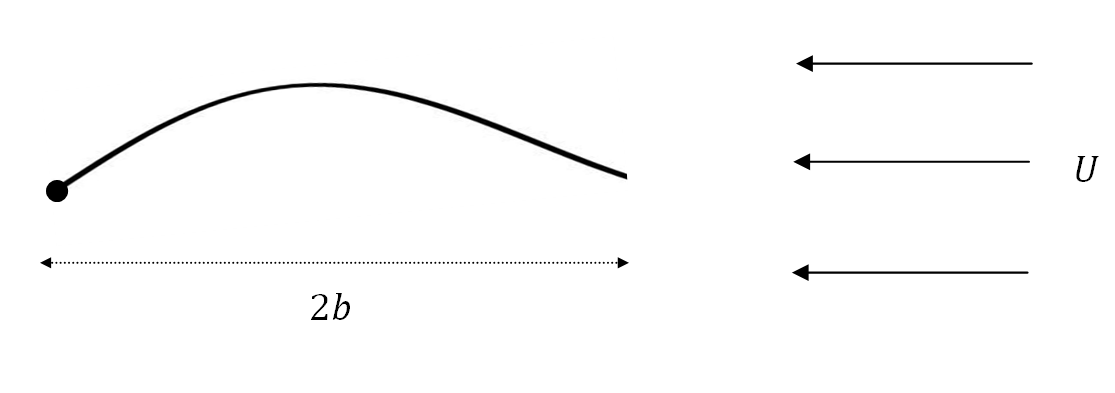}
\centering
\caption{free-pinned plate in an axial flow}
\label{fig:free pinned}
\end{figure}
On the other hand, if the leading edge is free and the trailing edge is clamped (there is a reaction moment), then we call that configuration free-clamped (see figure \eqref{fig:free clamped}).
\begin{figure}
\includegraphics[width=5in ,keepaspectratio=true]{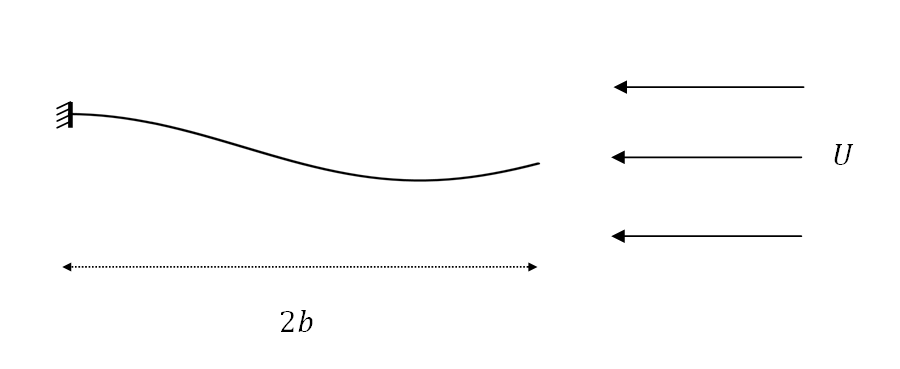}
\centering
\caption{free-clamped plate in an axial flow}
\label{fig:free clamped}
\end{figure}
We try to find, if it exists, an analytical or approximate formula for the divergence speed of the plate.     

\section{Plate Equation}
\forceindent In this work, we model the plate as an Euler-Bernoulli beam where the momentum balance is given by 
%
\begin{equation}
0=-\mathcal{M}_{yy}+\mathbb{F}_{fluid},
\end{equation}
where $\mathbb{F}_{fluid}$ is the fluid force per unit length, $ \mathcal{M}$ is the bending moment, and the subscript $(*)_y$  denote the derivatives with respect to $y$. The bending moment for Euler-Bernoulli beams is given by
\begin{equation}
\mathcal{M}=EI h_{yy},
\end{equation}
where $h(y)$ is the transverse deflection of the plate and $EI$ is the bending stiffness which is assumed to be constant along the length of the plate. Here, we  neglect the body forces and the tension along the length of the plate.  
Consequently, the plate is governed by the equation  
\begin{eqnarray}\label{eq:beam equation}
{EI}h_{yyyy} = \mathbb{F}_{fluid}&,&-b\leq y \leq b.
\end{eqnarray}
The boundary conditions for the  free-pinned plate are
\begin{equation}
h(b)=h_{yy}(b)=h_{yy}(-b)=h_{yyy}(-b)=0,
\end{equation}
and for the free-clamped plate, the boundary conditions are
\begin{equation}
h(b)=h_y(b)=h_{yy}(-b)=h_{yyy}(-b)=0.
\end{equation}
\section{Flow equations and the Possio integral equation}
\forceindent In this section, we derive and solve an integral equation, namely the Possio integral equation, based on the linearized Euler equation, linearized continuity equation, and linearized equation of state. The Possio integral equation relates the pressure jump along the plate to its downwash. 

First, we state the equations of an inviscid compressible two dimensional flow linearized about the free stream velocity $U$, the free stream flow pressure $p_0$ and the free stream density $\rho_0$. The flow equations are set to be two dimensional as it is assumed that the change in the flow variables in the direction of the width of the plate is negligible. This assumption is reasonable if we assume that $l\gg b$. The linearized Euler equations are
\begin{equation}\nonumber
\frac{\partial u}{\partial t}+U\frac{\partial u}{\partial y}=-\frac{1}{\rho_0}\frac{\partial p}{\partial y}
\end{equation}
and
\begin{equation}\nonumber
\frac{\partial v}{\partial t}+U\frac{\partial v}{\partial y}=-\frac{1}{\rho_0}\frac{\partial p}{\partial z}.
\end{equation}
The linearized continuity equation is given by
\begin{equation}\nonumber
\frac{\partial \rho}{\partial t}+U\frac{\partial \rho}{\partial y}+\rho_{0}\left(\frac{\partial u}{\partial y}+\frac{\partial v}{\partial z}\right)=0.
\end{equation}
Finally, the linearized state equation is give by 
\begin{equation}\nonumber
p=a_{\infty}^2\rho.
\end{equation}
The functions $u(y,z,t)$, $v(y,z,t)$, $p(y,z,t)$, and $\rho(y,z,t)$  are perturbation terms corresponding to  the flow  velocity component in the $y$ direction, the flow velocity component in the $z$ direction, the flow pressure, and the flow density respectively. The term $a_\infty$ is the free stream speed of sound which depends on the nature of the flow (for example: isothermal, isentropic, and so on).

The boundary conditions of the flow equations  are as follows. For any of the perturbation terms, denoted generically by $\delta (y,z,t)$, we have the far field condition
\begin{equation}\nonumber
\lim_{y^2+z^2\rightarrow \infty} \delta =0,
\end{equation}
and we also assume zero initial conditions for all perturbation terms.
Additionally, the pressure jump $\Delta p=p(y,0^+,t)-p(y,0^-,t)$ satisfies the Kutta-Joukowski condition
\begin{equation}\nonumber
\Delta p=0 ,~~|y|>b
\end{equation}
and the Kutta condition 
\begin{equation}\nonumber
\lim_{y\rightarrow b^{-}}\Delta p=0.
\end{equation}
Finally, the plate deformation is coupled with the flow by matching the normal velocities through the boundary condition
\begin{equation}\nonumber
v(y,0,t)=w_a,~~|y|\leq b,
\end{equation}
where $w_a$ is the downwash or the normal velocity on the plate surface. 

The derivation of the Possio equation starts with applying the Laplace transform in the $t$ variable and the Fourier transform in the $y$ variable  on the linearized equations to  result in the equations
\begin{equation}\label{eq:xmomentum transformed}
\lambda\hat{\hat{u}}+i\omega U\hat{\hat{u}}=(\lambda+i\omega U)\hat{\hat{u}}=-\frac{i\omega}{\rho_0}\hat{\hat{p}},
\end{equation}

\begin{equation}\label{eq:ymomentum transformed}
\lambda\hat{\hat{v}}+i\omega U\hat{\hat{v}}=(\lambda+i\omega U)\hat{\hat{v}}=-\frac{1}{\rho_0}\frac{\partial \hat{\hat{p}}}{\partial z},
\end{equation}

\begin{equation}\label{eq:continuity transformed}
\lambda\hat{\hat{\rho}}+i\omega U\hat{\hat{\rho}}+\rho_0\left(i\omega \hat{\hat{u}}+\frac{\partial \hat{\hat{v}}}{\partial z}\right)=0,
\end{equation}
and
\begin{equation}\label{eq:power law transformed}
\hat{\hat{p}}=a_{\infty}^2\hat{\hat{\rho}},
\end{equation}
where $\hat{f}(y,z,\lambda)=\int_{0}^{\infty}e^{-\lambda t}f(y,z,t) \, dt$ is the Laplace transform,  $Re(\lambda)\geq \sigma>0$, and $\hat{\hat{f}}(\omega,z,\lambda)=\int_{-\infty}^{\infty}e^{-i \omega x}\hat{f}(y,z,\lambda)\, dy$ is the Fourier transform. Substituting equation \eqref{eq:power law transformed} and \eqref{eq:xmomentum transformed} into equation \eqref{eq:continuity transformed} and solving for $\hat{\hat{p}}$ result in
\begin{equation}\label{eq:p in terms of v}
\hat{\hat{p}}=-\frac{\rho_0  a_\infty^2}{\lambda+i\omega U}\left(1+\frac{a_\infty ^2 \omega ^2}{(\lambda+i\omega U)^2}\right)^{-1}\frac{\partial \hat{\hat{v}}}{\partial z}.
\end{equation}
Substituting equation \eqref{eq:p in terms of v} into equation \eqref{eq:ymomentum transformed} results in

\begin{equation}\nonumber
(\lambda+i\omega U)\hat{\hat{v}}=\frac{ a_\infty^2}{\lambda+i\omega U}\left(1+\frac{ a_\infty ^2\omega ^2}{(\lambda+i\omega U)^2}\right)^{-1}\frac{\partial^2 \hat{\hat{v}}}{\partial z^2}
\end{equation}
or 
\begin{equation}\label{eq:DE for v}
\frac{\partial^2 \hat{\hat{v}}}{\partial z^2}=\left(\frac{(\lambda+i\omega U)^2}{a_\infty^2}+\omega ^2\right)\hat{\hat{v}}=B(\omega,\lambda)\hat{\hat{v}}.
\end{equation}
The solution to equation \eqref{eq:DE for v} is given by

\begin{equation}\label{eq:v hat hat}
\hat{\hat{v}}(\omega,z,\lambda)=\hat{\hat{v}}(\omega,0,\lambda)\begin{cases}
           e^{-\sqrt{B(\omega,\lambda)}z}, & z>0 ,\\           e^{\sqrt{B(\omega,\lambda)}z}, & z<0,
  \end{cases}
\end{equation}
where $\sqrt{*}$ is the square root with positive real part. Substituting the solution \eqref{eq:v hat hat} into equation \eqref{eq:ymomentum transformed} and integrating results in
\begin{equation}\label{eq:p hat hat}
\hat{\hat{p}}(\omega,z,\lambda)=-{\rho_0}\frac{(\lambda+i\omega U)}{\sqrt{B(\omega,\lambda)}}\hat{\hat{v}}(\omega,0,\lambda)\begin{cases}
           -e^{-\sqrt{B(\omega,\lambda)}z}, & z>0, \\
           e^{\sqrt{B(\omega,\lambda)}z}, & z<0.
         \end{cases}
\end{equation}
From equation \eqref{eq:p hat hat}, the pressure difference $\Delta p$ is given in the Fourier-Laplace domain by
\begin{equation}\nonumber
\Delta \hat{\hat{p}}={2\rho_0}\frac{(\lambda+i\omega U)}{\sqrt{B(\omega,\lambda)}}\hat{\hat{v}}(\omega,0,\lambda)
\end{equation}
and consequently we have, after using $\rho_0 U A=\Delta {p}$,
\begin{equation}\nonumber
2\hat{\hat{v}}(\omega,0,\lambda)=U\frac{\sqrt{B(\omega,\lambda)}}{\lambda+i \omega U}\hat{\hat{A}}.
\end{equation}
Using $k=\frac{\lambda}{U}$ in the above equation results in
\begin{equation}\label{eq:possio equation in the fourier domain}
2\hat{\hat{v}}(\omega,0,\lambda)=\frac{\sqrt{\tilde{B}(\omega,k)}}{k+i \omega }\hat{\hat{A}}.
\end{equation}
where 
$$\tilde{B}(\omega,k)=M^2(k+i\omega)^2+\omega^2.$$
Equation \eqref{eq:possio equation in the fourier domain} is referred to as the Possio equation in the Fourier domain. Note that the solutions of $\hat{\hat{v}}$ and $\hat{\hat{p}}$ given by \eqref{eq:v hat hat} and \eqref{eq:p hat hat} respectively are decaying as $B(\omega,\lambda)$ is never zero and therefore, the far field condition is satisfied.

We are interested in solving equation \eqref{eq:possio equation in the fourier domain} for the steady state case which corresponds to $\lambda=0$. Setting $\lambda=0$ reduces equation \eqref{eq:possio equation in the fourier domain} to
\begin{equation}\label{eq: Possio equation in the fourier domain k=0}
\frac{2}{\sqrt{1-M^2}}\hat{\hat{v}}(\omega,0,0)=\frac{|\omega|}{i \omega }\hat{\hat{A}}.
\end{equation}
The multiplier $|\omega|/i\omega$ corresponds to the Hilbert transform
\begin{equation}\nonumber
\mathcal{H}(f(t))(x)=\frac{1}{\pi}\int_{-\infty}^{\infty}\frac{f(t)}{x-t}\, dt.
\end{equation}
Therefore, equation \eqref{eq: Possio equation in the fourier domain k=0} corresponds to the integral equation (the variable t is dropped)
\begin{equation}
\frac{2}{\sqrt{1-M^2}}v(y,0)=\mathcal{H}(A)
\end{equation}
As the velocity of the flow is only known on the plate, we apply the projection operator (from now on, the term $p$ is not associated with the pressure) $\mathcal{P}:L^{p}(\infty,\infty)\rightarrow L^{p}[-b,b]$ on both sides of the above equation and use the Kutta-Joukowski condition which result in the Possio integral equation
\begin{equation}\label{eq:Possio integral equation}
\frac{2}{\sqrt{1-M^2}}w_a=\mathcal{H}_b(A),
\end{equation}
where $$\mathcal{H}_b(f(t))(x)=\frac{1}{\pi}\int_{-b}^{b}\frac{f(t)}{x-t}dt,~|x|\leq b$$ is the finite Hilbert operator.

The solvability of the Possio integral equation is illustrated as the following. In general, the solution to the Possio integral equation \eqref{eq:Possio integral equation} exists if $w_a \in L^{4/3+}[-b,b]$ and lies in $L^{4/3-}[-b,b]$ but the solution is not unique \cite{Tricomi}. If the Kutta condition is imposed and  $w_a\in L^{2+}[-b,b]$, then the solution to the Possio equation \eqref{eq:Possio integral equation} lies in $L^{4/3-}[-b,b]$ and is given uniquely by \cite{ground effect paper}
\begin{equation}\label{eq:solution to possio equation}
A=\frac{2}{\sqrt{1-M^2}}\mathcal{T}(w_a),
\end{equation}where 
\begin{equation}\nonumber
\mathcal{T}(f(t))(x)=\frac{1}{\pi}\sqrt{\frac{b-x}{b+x}}\int_{-b}^{b}\sqrt{\frac{b+t}{b-t}}\frac{f(t)}{t-x}dt,~|x|\leq b
\end{equation}
is the Tricomi operator.
\begin{remark}
In the work of Balakrishnan \cite{Bal}, the Possio equation was derived based on the linearization of the full nonlinear potential equation which assumes an ideal isentropic flow. Apparently, a linearization of the Euler, continuity, and state equations results in the same Possio equation that Balakrishnan derived. Despite the fact that we did not assume a potential (irrotational) flow in our derivation, the irrotationality comes from the linearization of the Euler equation about an irrotational velocity field. To illustrate this point, we write the linearized Euler equation in the vector form
\begin{equation}\label{eq:euler vector}
\frac{\partial \vv{u}}{\partial t}+\vv{U}\cdot \nabla \vv{u}=-\frac{1}{\rho_0} \nabla p,
\end{equation}
where $\vv{U}$ is the irrotational velocity field that the Euler equation is linearized about. Then, applying the curl operator $\nabla \times$ on each side of equation \eqref{eq:euler vector} results in
\begin{equation}\nonumber
\frac{\partial \Omega}{\partial t}+ U \frac{\partial}{\partial y} \Omega=0,
\end{equation}
where $Omega=\nabla \times \vv{u}$ is the 2D flow vorticity. This is a transport equation,  and  if we assume that the flow to be initially irrotational by imposing no perturbation initially in addition to the zero far field condition (in the y direction), then the flow will stay irrotational. 
Another way to show that the flow is irrotational is to apply the Fourier transform in the $y$ variable and the Laplace transform in the $t$ variable on the vorticity $\vv{\Omega}$. For our case of two dimensional flow, the vorticity has only one nonzero component in the $x$ direction which is given by $ \psi=\frac{\partial v}{\partial y}-\frac{\partial u}{\partial z}$. Therefore, using the solutions \eqref{eq:v hat hat}, \eqref{eq:p hat hat} and equation \eqref{eq:xmomentum transformed}, we have
\begin{equation}\label{eq:irrotational}
\hat{\hat{\psi}}=i\omega \hat{\hat{v}}-\frac{\partial \hat{\hat{u}}}{\partial z}=i\omega \hat{\hat{v}}-i\omega \hat{\hat{v}}=0,
\end{equation}
and that shows that the flow is irrotational. Equation \eqref{eq:irrotational} still holds for the steady state case.

\end{remark}

\section{Static aeroelastic equations}
\forceindent After obtaining a solution to the Possio integral equation, we have the aerodynamic force term $\mathbb{F}_{fluid}$ is given by 
$\mathbb{F}_{fluid}=l\Delta p$
assuming no change in the pressure along the width of the plate. For the steady state case, the downwash of the plate is given by $w_a=-Uh_y$. Therefore, using the solution \eqref{eq:solution to possio equation}, the plate governing equation can be written as
\begin{equation}\label{eq:final beam equation}
h_{yyyy}=W(h_y),	
\end{equation}
where the integral operator $W$ is defined as
\begin{equation}\nonumber
W(h_{y})=-\frac{2\rho_0 U^2l}{EI\sqrt{1-M^2}}\mathcal{T}(h_y).
\end{equation}
Equation \eqref{eq:final beam equation} is a singular differential-integral equation.  We need to find the values of $U$ such that this equation has solutions satisfying the plate boundary conditions. Such a problem can be referred to as an eigenvalue problem. Eigenvalue problems appear in many aeroelastic problems (for example: finding the flutter speed) and in engineering applications in general (for example: finding the buckling critical load of a beam).   

Now, we study the solvability of  equation \eqref{eq:final beam equation}  as the following. Let
\begin{equation}\nonumber
\mathbf{H}=\left( \begin{array}{c} h \\ {h}_y\\ {h}_{yy} \\ {h}_{yyy} \end{array} \right),
\end{equation}

\begin{equation}\nonumber
\mathcal{A}=\left(\begin{array}{cccc}0 & 1 & 0 & 0\\0 & 0 & 1 & 0\\0 & 0 &0 & 1\\0 & 0 & 0 & 0\end{array}\right),
\end{equation}
and
\begin{equation}\nonumber
\mathcal{W}=\left(\begin{array}{cccc}0 & 0 & 0 & 0\\0 & 0 & 0 & 0\\0 & 0 &0 & 0\\ 0& W & 0 & 0\end{array}\right).
\end{equation}
Then, the state space representation of equation \eqref{eq:final beam equation} is given by
\begin{equation}\label{eq: beam equation state space}
\mathbf{H}_{y}=\mathcal{A}\mathbf{H}+\mathcal{W}\mathbf{H}.
\end{equation}
The solution to equation \eqref{eq: beam equation state space} is equivalent to solving the integral equation (variation of parameters formula)
\begin{equation}\label{eq: equivalent integral equation}
\mathbf{H}(y)=e^{(y+b)\mathcal{A}}\mathbf{H}(-b)+\int_{-b}^{y}e^{(y-s)\mathcal{A}}\mathcal{W}\mathbf{H}ds,
\end{equation}
where the exponential matrix  $e^{y\mathcal{A}}$ is given by
\begin{equation}\nonumber
e^{y\mathcal{A}}=\left(\begin{array}{cccc}
\eta_1(y) & \eta_2(y) & \eta_3(y) & \eta_4(y)\\\eta_1'(y) & \eta_2'(y) & \eta_3'(y) & \eta_4'(y)\\\eta_1''(y) & \eta_2''(y) & \eta_3''(y) & \eta_4''(y)\\\eta_1'''(y) & \eta_2'''(y) & \eta_3'''(y) & \eta_4'''(y)
\end{array}\right)
=\left(\begin{array}{cccc}
1 & y & \frac{y^2}{2} & \frac{y^3}{6}\\0 & 1 & y & \frac{y^2}{2}\\0 & 0 & 1 & y\\0 & 0 & 0 & 1
\end{array}\right),
\end{equation}
where the primes denote the derivatives with respect to $y$.
Let
\begin{equation}\nonumber
[\mathcal{J}(\mathbf{H})](y)=\int_{-b}^{y}e^{(y-s)\mathcal{A}}\mathcal{W}\mathbf{H}ds.
\end{equation}
Then, equation \eqref{eq: equivalent integral equation} can be written in the abstract form
\begin{equation}\label{eq: abstract integral equation}
[(\mathcal{I}-\mathcal{J})\mathbf{H}](y)=e^{(y+b)\mathcal{A}}\mathbf{H}(-b),
\end{equation}
where $\mathcal{I}$ is the identity operator applied on $4\times 4$ matrices with entries in $L^{p}[-b,b],~p\geq 1$.
The operator $\mathcal{I}-\mathcal{J}$ is written explicitly as
\begin{equation}\nonumber
\mathcal{I}-\mathcal{J}=\left(\begin{array}{cccc} I &-T_1 & 0&0\\0&I-T_2&0&0\\0&-T_3& I&0\\0&-T_4&0&I\end{array}\right),
\end{equation} 
where $I:L^p[-b,b]\rightarrow L^{p}[-b,b]$ is the identity operator. The integral operators $T_i:L^{p}[-b,b]\rightarrow C[-b,b],i=1,2,3,4$ are defined as
$$[T_i(f)](y)=\int_{-b}^{y}\eta_{4}^{(i-1)}(y-s)W(f) \,ds,$$
where $\eta^{i}_4$ is the ith derivative of $\eta_4$. If $\mathcal{I}-\mathcal{J}$ is invertible, then the inversion formula is given by
\begin{equation}\nonumber
(\mathcal{I}-\mathcal{J})^{-1}=\left(\begin{array}{cccc} I &T_1(I-T_2)^{-1} & 0&0\\0&(I-T_2)^{-1}&0&0\\0&T_3(I-T_2)^{-1}& I&0\\0&T_4(I-T_2)^{-1}&0&I\end{array}\right).
\end{equation}It is noticed from the above formula that the operator $\mathcal{I}-\mathcal{J}$ is invertible if the operator $I-T_2$  is invertible. Analyzing the invertibility of the operator $I-T_2$ is equivalent to studying the integral equation
\begin{equation}\label{eq:integral equation to be solved}
(I-T_2)f=f(y)+\mu\int_{-b}^{y}(y-s)^2\mathcal{T}(f(t))(s)ds=g(y)~~,~~|y|\leq b,
\end{equation}
where $$\mu=\frac{\rho_0 U^2l}{EI\sqrt{1-M^2}}.$$
Note that the parameter $\mu$ will play an important role in the upcoming discussion. 

Now, we state some preliminary  lemmas, with their proofs, that are necessary for proving the solvability of equation \eqref{eq:integral equation to be solved}.

\begin{lemma}
The operator $\mathcal{L}:L^{p}[-b,b]\rightarrow C[-b,b],~p\geq1$ given by 
\begin{equation}\nonumber
[\mathcal{L}(f)](y)=\int_{-b}^{y}(y-s)^2 f(s) \,ds
\end{equation}
is compact.
\end{lemma}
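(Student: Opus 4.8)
The plan is to verify compactness directly through the Arzela-Ascoli theorem stated in the preliminaries: I take an arbitrary bounded sequence $\{f_n\}$ in $L^{p}[-b,b]$, say with $\|f_n\|_{L^{p}[-b,b]}\leq M$ for all $n$, and show that the image sequence $\{\mathcal{L}(f_n)\}$ is uniformly bounded and equicontinuous in $C[-b,b]$, so that it admits a convergent subsequence. Throughout I let $q$ denote the H\"older conjugate of $p$ (with $q=\infty$ when $p=1$).

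First I would establish that $\mathcal{L}$ maps into $C[-b,b]$ with a uniformly bounded image. Since the kernel satisfies $|y-s|^{2}\leq 4b^{2}$ for $s,y\in[-b,b]$, H\"older's inequality gives
\[
\bigl|[\mathcal{L}(f)](y)\bigr|\leq 4b^{2}\int_{-b}^{b}|f(s)|\,ds\leq 4b^{2}(2b)^{1/q}\|f\|_{L^{p}[-b,b]},
\]
so each $\mathcal{L}(f_n)$ is bounded uniformly in $n$ by a constant depending only on $b$ and $M$.

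The heart of the argument is the equicontinuity estimate, which I regard as the main obstacle because the upper limit of integration moves with $y$ and the kernel $(y-s)^{2}$ is itself the varying quantity. For $-b\leq y_1\leq y_2\leq b$ I would split
\[
[\mathcal{L}(f)](y_2)-[\mathcal{L}(f)](y_1)=\int_{-b}^{y_1}\bigl[(y_2-s)^{2}-(y_1-s)^{2}\bigr]f(s)\,ds+\int_{y_1}^{y_2}(y_2-s)^{2}f(s)\,ds.
\]
On the first integral the factorization $(y_2-s)^{2}-(y_1-s)^{2}=(y_2-y_1)(y_1+y_2-2s)$ together with $|y_1+y_2-2s|\leq 4b$ yields a bound proportional to $|y_2-y_1|\,\|f\|_{L^{1}[-b,b]}$; on the second integral the elementary observation that $(y_2-s)^{2}\leq(y_2-y_1)^{2}$ for $s\in[y_1,y_2]$ yields a bound proportional to $|y_2-y_1|^{2}\|f\|_{L^{1}[-b,b]}$. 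Estimating $\|f\|_{L^{1}[-b,b]}\leq(2b)^{1/q}\|f\|_{L^{p}[-b,b]}\leq(2b)^{1/q}M$ by H\"older, the whole difference is at most $C\bigl(|y_2-y_1|+|y_2-y_1|^{2}\bigr)$ with $C$ independent of $n$; this modulus of continuity being uniform in $n$ is precisely equicontinuity, and applied to a fixed $f$ it simultaneously confirms $\mathcal{L}(f)\in C[-b,b]$. I note that bounding the second integral through $(y_2-s)^{2}\leq(y_2-y_1)^{2}$ rather than through $4b^{2}$ is exactly what lets this single estimate cover the borderline case $p=1$, where $q=\infty$ supplies no decaying power of $|y_2-y_1|$ on its own.

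With uniform boundedness and equicontinuity established, the Arzela-Ascoli theorem produces a convergent subsequence of $\{\mathcal{L}(f_n)\}$ in $C[-b,b]$. Since $\{f_n\}$ was an arbitrary bounded sequence in $L^{p}[-b,b]$, the operator $\mathcal{L}$ is compact by the definition recalled in the preliminaries.
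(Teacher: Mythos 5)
Your proof is correct and follows essentially the same route as the paper's: the identical Hölder bound for uniform boundedness, the same two-piece decomposition of $[\mathcal{L}(f)](y_2)-[\mathcal{L}(f)](y_1)$ for equicontinuity, and Arzela--Ascoli to conclude. The only (harmless) difference is that you extract an explicit modulus of continuity $C(|y_2-y_1|+|y_2-y_1|^2)$ by factoring the kernel difference, where the paper instead appeals to the uniform continuity of $(y-x)^2$.
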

\begin{proof}
First, we show that $\mathcal{L}$ is bounded. Let $f\in L^{p}[-b,b]$ and let $q$ be related to $p$ through the relation $1/p+1/q=1$, then using H\"older's inequality we have 
\begin{equation}\nonumber
\begin{split}
|\mathcal{L}(f)|& \leq (y+b)^2 \int_{-b}^{y}|f(s)|\, ds\\
& \leq 4b^2  \int_{-b}^{b}|f(s)|\, ds\\
& \leq 4b^2 (2b)^{1/q}||f||_{L^{p}[-b,b]}.
\end{split}
\end{equation}
Therefore $\mathcal{L}:L^{p}[-b,b]\rightarrow C[-b,b],p\geq1$ is bounded. Next, we show that the image of a bounded sequence $||f_n||_{L^{p}}\leq C_0,~C_0\geq 0$  under $\mathcal{L}$ is  equi-continuous. Assume $-b<x<y<b$ , then we have
\begin{equation}\nonumber
\begin{split}
|[\mathcal{L}(f_n)](y)-[\mathcal{L}(f_n)](x)|&=\left|\int_{x}^{y}(y-s)^2f_n (s)\, ds+\int_{-b}^{x}((y-s)^2-(x-s)^2)f_n (s)\,ds\right|\\
&\leq  \int_{x}^{y}(y-s)^2|f_n (s)|\,ds+\int_{-b}^{x}|(y-s)^2-(x-s)^2||f_n (s)|\,ds\\
&\leq (y-x)^2 \int_{-b}^{b}|f_n (s)|\, ds+\int_{-b}^{b}|(y-s)^2-(x-s)^2||f_n (s)|\, ds\\
& \leq  (y-x)^2 (2b)^{1/q}||f_n||_{L^{p}[-b,b]}+\left(\int_{-b}^{b}\left|(y-s)^2-(x-s)^2 \right|^{q}  \,ds \right)^{1/q}||f_n||_{L^{p}[-b,b]}
\end{split}
\end{equation}
The function $k(x,y)=(y-x)^2$ is uniformly continuous on $[-b,b]\times [-b,b]$ and additionally, $k(y,x)\rightarrow 0$  as $|y-x|\rightarrow 0$. Therefore, the right hand side of the above inequality can be set to be arbitrarily small for sufficiently small   $|y-x|$.  Then, by the Arzela-Ascoli theorem, there exists a convergent subsequence of the sequence $\mathcal{L}(f_n)$. Consequently, the operator $\mathcal{L}$ is compact.
\end{proof}

\begin{lemma}
The Tricomi operator is bounded from $C[-b,b]$ to $L^{4/3-}[-b,b]$.
\end{lemma}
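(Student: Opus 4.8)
The plan is to recast the Tricomi operator as a weighted finite Hilbert transform and then exploit the fact that the weight and the transformed data become singular at \emph{opposite} endpoints. Writing $w(x)=\sqrt{\tfrac{b+x}{b-x}}$, so that $w^{-1}(x)=\sqrt{\tfrac{b-x}{b+x}}$, and recalling that $\mathcal{H}_b(\phi)(x)=\tfrac1\pi\int_{-b}^b \tfrac{\phi(t)}{x-t}\,dt$, a direct rearrangement of the kernel (using $\tfrac{1}{t-x}=-\tfrac{1}{x-t}$) gives $\mathcal{T}(f)=-\,w^{-1}\,\mathcal{H}_b(wf)$. I would record two standard facts to be used throughout: first, $w$ and $w^{-1}$ both lie in $L^{q}[-b,b]$ for every $q<2$ (their only singularities, $(b-t)^{-1/2}$ at $t=b$ and $(b+x)^{-1/2}$ at $x=-b$ respectively, are $q$-integrable precisely for $q<2$); second, the finite Hilbert transform $\mathcal{H}_b$ is bounded on $L^{p}[-b,b]$ for every $1<p<\infty$ (the classical M.\ Riesz theorem). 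Since $f\in C[-b,b]$ is bounded with $\|f\|_{L^\infty}=\|f\|_{C[-b,b]}$, the datum $\phi:=wf$ satisfies $\phi\in L^{q}[-b,b]$ with $\|\phi\|_{L^{q}}\le \|w\|_{L^{q}}\,\|f\|_{C[-b,b]}$ for all $q<2$.

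It is worth noting at the outset why a single global estimate cannot work: bounding $\|w^{-1}\mathcal{H}_b(wf)\|_{L^{p}}$ by Hölder's inequality forces both $w^{-1}$ and $\mathcal{H}_b(wf)$ into $L^{2-}$, and pairing two $L^{2-}$ factors only yields $L^{p}$ for $p<1$, which is useless. The resolution, and the heart of the argument, is to localize. I would split $[-b,b]=[-b,0]\cup[0,b]$ and estimate separately, using that $w^{-1}$ is singular only at $x=-b$ while $\mathcal{H}_b(wf)$ inherits its only singular behaviour from that of $\phi$ at $t=b$.

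On $[0,b]$ the weight is harmless, since $w^{-1}(x)=\sqrt{\tfrac{b-x}{b+x}}\le 1$ there; hence $\int_0^b|\mathcal{T}(f)|^{p}\le \|\mathcal{H}_b(\phi)\|_{L^{p}[-b,b]}^{p}\le (C_p\|\phi\|_{L^{p}})^{p}\lesssim \|f\|_{C[-b,b]}^{p}$ for any $1<p<2$, by the $L^{p}$–boundedness of $\mathcal{H}_b$. On $[-b,0]$ the point is to show that $\mathcal{H}_b(\phi)$ is in fact far better than $L^{2-}$, namely in every $L^{s}$. I would split $\phi=\phi_1+\phi_2$ with $\phi_1=\phi\,\mathbf 1_{[-b,b/2]}$ and $\phi_2=\phi\,\mathbf 1_{[b/2,b]}$. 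Since $w\le\sqrt3$ on $[-b,b/2]$, one has $\phi_1\in L^\infty$, so $\mathcal{H}_b(\phi_1)\in L^{s}$ for every $s<\infty$; and for $x\in[-b,0]$, $t\in[b/2,b]$ one has $t-x\ge b/2$, whence $|\mathcal{H}_b(\phi_2)(x)|\le \tfrac{2}{\pi b}\|\phi\|_{L^{1}}$ is uniformly bounded on $[-b,0]$. Thus $\mathcal{H}_b(\phi)\in L^{s}([-b,0])$ for all $s<\infty$, with norm controlled by $\|f\|_{C[-b,b]}$. Since $w^{-1}\in L^{r}([-b,0])$ for all $r<2$, Hölder's inequality with $\tfrac1p=\tfrac1r+\tfrac1s$ lets $p$ range up to (just below) $2$; choosing $r$ near $2$ and $s$ large covers every $p<4/3$ and gives $\|\mathcal{T}(f)\|_{L^{p}([-b,0])}\lesssim \|f\|_{C[-b,b]}$. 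Adding the two regions yields the claim for $1<p<4/3$, and the case $p=1$ follows from the finite-measure inclusion $L^{p_0}[-b,b]\subset L^{1}[-b,b]$ for any $p_0\in(1,4/3)$.

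The main obstacle is precisely the failure of the naive global Hölder bound; once one sees that the proof must separate the two endpoints, the remaining work is routine. The only non-elementary ingredient is the $L^{p}$–boundedness of the finite Hilbert transform, which I would simply cite as classical. (The same argument in fact yields the stronger conclusion $\mathcal{T}:C[-b,b]\to L^{2-}[-b,b]$, but only $L^{4/3-}$ is needed in what follows.)
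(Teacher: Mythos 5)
Your proof is correct, but it takes a genuinely different route from the paper. The paper's proof is essentially a one-line reduction: it cites the authors' earlier result that $\mathcal{T}:L^{2+}[-b,b]\rightarrow L^{4/3-}[-b,b]$ is bounded and composes it with the trivial embedding $\|f\|_{L^{2+}[-b,b]}\leq (2b)^{1/2+}\|f\|_{C[-b,b]}$. You instead give a self-contained argument from the representation $\mathcal{T}(f)=-w^{-1}\mathcal{H}_b(wf)$, whose key idea --- that the outer weight $w^{-1}$ is singular only at $x=-b$ while $wf$ is singular only at $t=b$, so a localization to $[-b,0]\cup[0,b]$ defeats the failure of the naive global H\"older bound --- is sound, and each step checks out (the bound $w^{-1}\leq 1$ on $[0,b]$, the splitting $\phi=\phi_1+\phi_2$ with $\phi_1$ bounded and $\mathcal{H}_b(\phi_2)$ uniformly bounded on $[-b,0]$ because $t-x\geq b/2$ there, and the final H\"older pairing with $1/p=1/r+1/s$). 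What each approach buys: the paper's is shorter and defers all the analysis to the cited reference, where the $L^{2+}\to L^{4/3-}$ mapping property is established for the larger input class actually needed to solve the Possio equation; yours needs only the classical M.~Riesz theorem for the finite Hilbert transform and yields the strictly stronger conclusion $\mathcal{T}:C[-b,b]\to L^{2-}[-b,b]$, making clear that the $4/3$ threshold is an artifact of $L^{2+}$ data rather than intrinsic to continuous inputs. Since the lemma is only used to conclude compactness of $\mathcal{L}\mathcal{T}$ on $C[-b,b]$, either version suffices.
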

\begin{proof}
It was shown in \cite{ground effect paper} that $\mathcal{T}:L^{2+}[-b,b]\rightarrow L^{4/3-}[-b,b]$ is bounded with an operator norm denoted by $||\mathcal{T}||$. Let $f\in C[-b,b]$, then
\begin{equation}\nonumber
||\mathcal{T}(f)||_{L^{4/3-}[-b,b]}\leq ||\mathcal{T}||~ ||f||_{L^{2+}[-b,b]}\leq (2b)^{1/2+}||\mathcal{T}||~ ||f||_{C[-b,b]} 
\end{equation}
and that completes the proof.

\end{proof}
\begin{theorem}
The integral equation \eqref{eq:integral equation to be solved} has a unique solution for a continuous, but not connected, range of values of $\mu$. 
\end{theorem}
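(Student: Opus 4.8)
The plan is to recognize the operator $T_2$ as a constant multiple of a composition of the two operators analyzed in Lemmas~1 and~2, deduce its compactness, and then solve the equation by viewing $\mu$ as the spectral parameter in the Fredholm alternative. First I would rewrite $T_2$ explicitly. Since $\eta_4(y)=y^3/6$ we have $\eta_4'(y)=y^2/2$, so
\[
[T_2(f)](y)=\int_{-b}^{y}\tfrac{(y-s)^2}{2}W(f)(s)\,ds=-\mu\int_{-b}^{y}(y-s)^2\,\mathcal{T}(f)(s)\,ds=-\mu\,[\mathcal{L}(\mathcal{T}f)](y),
\]
i.e. $T_2=-\mu\,\mathcal{L}\mathcal{T}$. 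Setting $K:=\mathcal{L}\mathcal{T}$ and viewing it on $C[-b,b]$, Lemma~2 gives that $\mathcal{T}:C[-b,b]\to L^{4/3-}[-b,b]$ is bounded, and Lemma~1 gives that $\mathcal{L}:L^{4/3-}[-b,b]\to C[-b,b]$ is compact (any exponent in $[1,4/3)$ is admissible). By the composition rule for a bounded operator followed by a compact one, $K:C[-b,b]\to C[-b,b]$ is compact, and $I-T_2=I+\mu K$.

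Next I would translate the solvability question into a spectral one. Since $g$ is the relevant polynomial entry of $e^{(y+b)\mathcal{A}}\mathbf H(-b)$, it lies in $C[-b,b]$, so the problem is to invert $I+\mu K$ on $C[-b,b]$. For $\mu=0$ this is the identity. For $\mu\neq0$, factor $I+\mu K=-\mu\big((-1/\mu)I-K\big)$; applying the Fredholm alternative to the compact operator $K$ with spectral parameter $-1/\mu\neq0$ shows that $(-1/\mu)I-K$, and hence $I+\mu K$, is invertible precisely when $-1/\mu\notin\sigma_p(K)$. Therefore the integral equation has a unique solution for every $\mu$ outside the exceptional set $\{-1/\lambda:\lambda\in\sigma_p(K),\ \lambda\neq0\}$.

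Finally I would describe the structure of this exceptional set to justify the wording of the statement. Because $K$ is compact, $\sigma(K)$ (and a fortiori $\sigma_p(K)$) is at most countable with only possible accumulation point $0$, so the reciprocals $-1/\lambda$ form an at most countable set of isolated real values with no finite accumulation point. Removing such a discrete set from the admissible range $\mu\in[0,\infty)$ leaves a union of open intervals: this is exactly the sense in which the solvability set is \emph{continuous} (each surviving piece is an interval) yet \emph{not connected} (the eigenvalue-induced points break it apart).

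The main obstacle is this last step. The functional-analytic toolbox quoted in the Preliminaries only records that $\sigma(K)$ is countable, which by itself is too weak — the complement of a merely countable set need not contain any interval. To obtain genuine disconnectedness into intervals I must invoke the sharper standard fact that the nonzero eigenvalues of a compact operator are isolated and accumulate only at $0$, and, for the "not connected" claim to be non-vacuous, confirm that $\sigma_p(K)$ really does contain nonzero values in the relevant range — a point that the later explicit characteristic equations for the free-pinned and free-clamped plates are expected to make concrete.
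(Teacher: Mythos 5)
Your proposal is correct and follows essentially the same route as the paper: identify $T_2=-\mu\,\mathcal{L}\mathcal{T}$, deduce compactness of $\mathcal{L}\mathcal{T}$ on $C[-b,b]$ from Lemma~1 (compactness of $\mathcal{L}$) and Lemma~2 (boundedness of $\mathcal{T}$), and invoke the Fredholm alternative to get unique solvability whenever $-1/\mu\notin\sigma_{p}(\mathcal{L}\mathcal{T})$. Your closing observation is in fact slightly sharper than the paper's own argument, which only cites countability of the spectrum (the paper supplements this with a Neumann-series/contraction bound for small $\mu$, which you omit but which is not needed for the statement as phrased).
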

\begin{proof}
 The operator $\mathcal{L}:L^{4/3-}[-b,b]\rightarrow C[-b,b]$ is compact and the operator $\mathcal{T}:C[-b,b]\rightarrow L^{4/3-}[-b,b]$ is bounded. Therefore, the operator $\mathcal{L}\mathcal{T}:C[-b,b]\rightarrow C[-b,b]$ is compact . Additionally, $\sigma_{p}(\mathcal{L}\mathcal{T})$  is at most countable as $\mathcal{L}\mathcal{T}$ is compact.  Then using the Fredholm Alternative, the integral equation  \eqref{eq:integral equation to be solved} has unique solutions whenever $\mu \neq-\frac{1}{\lambda}$ for all $\lambda \in \sigma(\mathcal{L}\mathcal{T})$. A weaker result can be obtained for small values of $\mu$. If  $\mu < \frac{1}{ ||\mathcal{L}||||\mathcal{T}||} $, then $T_2$ is a contraction mapping and therefore, the inverse of $I-T_2$ is given by $\sum_{n=0}^\infty T_2^n$ and that complete the proof.
\end{proof}
After we verified the invertibility of the operator $I-\mathcal{J}$ for a range of values of $\mu$, we have the solution to equation \eqref{eq:integral equation to be solved} is given by
\begin{equation}\label{eq:general solution to the integral equation}
\mathbf{H}(y)=[(\mathcal{I}-\mathcal{J})^{-1}e^{(\cdot+b)\mathcal{A}}\mathbf{H}(-b)](y).
\end{equation}
The previous theorem indicates that there exists  unique solution to the aeroelastic equations. However, this theorem and equation \eqref{eq:general solution to the integral equation} do not guarantee the existence of a solution to the aeroelastic equation that satisfies the boundary conditions of the plate. An analytical treatment to the existence of solutions satisfying the boundary conditions is outside the scope of this work. In the upcoming section, we derive the characteristic equations from which the divergence speed is obtained

\section{characteristic equations of  plates in axial flow}
\forceindent In this section, we derive the characteristic equations of the free-pinned and free-clamped plates. The minimum solutions to the characteristic equations are the divergence speeds. The derivation of the characteristic equations is obtained by matching the nonzero entries of $\mathbf{H}(-b)$ with  the zero terms of $\mathbf{H}(b)$ using the relation  

\begin{equation}\label{eq:Ax=0}
\mathbf{0}=\mathbf{P}[(I-\mathcal{J})^{-1}e^{(\cdot+b)\mathcal{A}}](b)\mathbf{Q}\mathbf{u},
\end{equation}
where
\begin{equation}\nonumber
\mathbf{P}=\left(\begin{array}{cccc}1&0&0&0\\0&0&1&0\end{array}\right)
\end{equation}
for free-pinned plates,

\begin{equation}\nonumber
\mathbf{P}=\left(\begin{array}{cccc}1&0&0&0\\0&1&0&0\end{array}\right)
\end{equation}
for free-clamped plates,

\begin{equation}\nonumber
\mathbf{Q}=\left(\begin{array}{c c} 1 & 0 \\ 0 & 1 \\ 0 & 0\\ 0 & 0
\end{array}\right)
\end{equation}
for both free-pinned and free-clamped plates, and 
\begin{equation}\nonumber
\mathbf{u}=\left(\begin{array}{c} h(-b)\\ {h}_{y}(-b) \end{array}\right)
\end{equation}
for both free-pinned and free-clamped plates. From equation \eqref{eq:Ax=0}, it is deduced that
\begin{equation}\label{eq:characteristic equation}
d(U)=\det \big(\mathbf{P}[(I-\mathcal{J})^{-1}e^{(\cdot+b)\mathcal{A}}](b)\mathbf{Q}\big)=0
\end{equation}
for which is the characteristic equation that we need solve. The smallest solution to \eqref{eq:characteristic equation}, if it exists, is the divergence speed $U_{div}$. Formally speaking, the divergence speed is defined as the minimum speed at which the static aeroelastic equations linearized about the steady state solution have a nonzero solution \cite{Transonic Dip}. In the following theorem, we show that the smallest solution to \eqref{eq:characteristic equation} satisfies the formal definition of the divergence speed.
\begin{theorem}
If there exists a solution $U$ to the characteristic equation \eqref{eq:characteristic equation}, then there exists a nonzero  solution to the aeroelastic equation \eqref{eq: equivalent integral equation} satisfying the plate boundary condition.
\end{theorem}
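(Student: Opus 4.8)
The plan is to read the characteristic equation as the solvability condition of a $2\times2$ homogeneous linear system for the free-edge data, and then to reverse-engineer a genuine solution of the aeroelastic integral equation from a nontrivial kernel vector of that system. The whole argument is essentially the ``$\det=0\Rightarrow$ nontrivial kernel'' direction of linear algebra, dressed up through the operator machinery already assembled, so the work lies in the bookkeeping rather than in any new estimate.

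First I would record the structure of the solution set. Since the hypothesis that $U$ solves \eqref{eq:characteristic equation} already presupposes that $\mathcal{I}-\mathcal{J}$ is invertible at this $U$ (otherwise $d(U)$ is not even defined), every solution of \eqref{eq: equivalent integral equation} is given by \eqref{eq:general solution to the integral equation} and is therefore completely determined by its leading-edge data $\mathbf{H}(-b)$. Both plate configurations impose the free-edge conditions $h_{yy}(-b)=h_{yyy}(-b)=0$, so the admissible data are exactly those of the form $\mathbf{H}(-b)=\mathbf{Q}\mathbf{u}$ with $\mathbf{u}=(h(-b),h_y(-b))^{T}$. Setting $\mathbf{M}(U)=\mathbf{P}[(\mathcal{I}-\mathcal{J})^{-1}e^{(\cdot+b)\mathcal{A}}](b)\,\mathbf{Q}$, the two remaining trailing-edge conditions extracted by $\mathbf{P}$ collapse, via \eqref{eq:Ax=0}, to the homogeneous system $\mathbf{M}(U)\mathbf{u}=\mathbf{0}$. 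Thus a nonzero admissible solution satisfying all four boundary conditions exists precisely when $\mathbf{M}(U)$ has a nontrivial kernel. Because $d(U)=\det\mathbf{M}(U)$, the hypothesis $d(U)=0$ furnishes a nonzero vector $\mathbf{u}_0\in\ker\mathbf{M}(U)$, and I would then define
\begin{equation}\nonumber
\mathbf{H}(y)=[(\mathcal{I}-\mathcal{J})^{-1}e^{(\cdot+b)\mathcal{A}}](y)\,\mathbf{Q}\mathbf{u}_0
\end{equation}
as the candidate solution, leaving three claims to check: that $\mathbf{H}$ solves \eqref{eq: equivalent integral equation}, that it satisfies the plate boundary conditions, and that it is not identically zero.

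The key verification is the self-consistency identity $\mathbf{H}(-b)=\mathbf{Q}\mathbf{u}_0$, which is the only step that is not pure linear algebra. It holds because each $T_i$ is a Volterra operator vanishing at $y=-b$, so the relation $(I-T_2)h=g$ forces $h(-b)=g(-b)$; feeding this through the explicit inversion formula for $(\mathcal{I}-\mathcal{J})^{-1}$ and using $e^{0}=I$ shows that evaluating the defining formula at $y=-b$ returns $\mathbf{Q}\mathbf{u}_0=(u_{0,1},u_{0,2},0,0)^{T}$. Granting this, (i) $\mathbf{H}$ satisfies $(\mathcal{I}-\mathcal{J})\mathbf{H}=e^{(\cdot+b)\mathcal{A}}\mathbf{H}(-b)$, which is exactly the abstract equation \eqref{eq: abstract integral equation}, equivalently \eqref{eq: equivalent integral equation}; (ii) the leading-edge data $(u_{0,1},u_{0,2},0,0)^{T}$ give the free-edge conditions, while $\mathbf{P}\mathbf{H}(b)=\mathbf{M}(U)\mathbf{u}_0=\mathbf{0}$ gives the trailing-edge conditions in either configuration; and (iii) since $\mathbf{u}_0\neq\mathbf{0}$ we have $\mathbf{H}(-b)\neq\mathbf{0}$, so $\mathbf{H}\not\equiv\mathbf{0}$. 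I would also remark that Lemmas~1 and~2 ensure all four entries of $\mathbf{H}$ lie in $C[-b,b]$, so that the pointwise boundary values invoked above are genuinely meaningful.

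I expect the main obstacle to be expository rather than mathematical: making the self-consistency step airtight, namely that the inverse operator restores the left-endpoint data because $\mathcal{J}$ is Volterra and vanishes at $y=-b$, and being careful to state that invertibility of $\mathcal{I}-\mathcal{J}$ at $U$ is implicit in the hypothesis so that $\mathbf{M}(U)$, and hence $d(U)$, are well defined. Once these two points are pinned down, the theorem follows immediately from the singularity of $\mathbf{M}(U)$.
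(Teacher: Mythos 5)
Your proposal is correct and follows essentially the same route as the paper's own proof: take a nonzero vector $\mathbf{u}_0$ in the kernel of $\mathbf{P}[(\mathcal{I}-\mathcal{J})^{-1}e^{(\cdot+b)\mathcal{A}}](b)\mathbf{Q}$, which exists because $d(U)=0$, and use it as the leading-edge data to build a nonzero solution via \eqref{eq:general solution to the integral equation}. Your write-up is in fact more careful than the paper's, which asserts without detail that the constructed $\mathbf{H}$ satisfies the boundary conditions, whereas you explicitly verify the self-consistency $\mathbf{H}(-b)=\mathbf{Q}\mathbf{u}_0$ from the Volterra structure of $\mathcal{J}$ and check both the free-edge and trailing-edge conditions.
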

\begin{proof}
Let $U$ be a solution to the characteristic equation \eqref{eq:characteristic equation}. Therefore, the null space of the matrix $\mathbf{P}[(\mathcal{I}-\mathcal{J})^{-1}e^{( \cdot+b)\mathcal{A}}](b)\mathbf{Q}$ is nonzero. Let $\mathbf{u}$, defined previously, be a nonzero choice from the null space of the this matrix. Note that the null space is infinite, therefore, the constructed nonzero solution is not unique. By the continuity of the matrix $[(\mathcal{I}-\mathcal{J})^{-1}e^{(\cdot+b)\mathcal{A}}](y)$, the solution
$$\mathbf{H}(y)=[(\mathcal{I}-\mathcal{J})^{-1}e^{(\cdot+b)\mathcal{A}}](y)\mathbf{H}(-b)
$$
is nonzero and it satisfies the boundary conditions of the plate and that completes the proof.
\end{proof}

\forceindent By direct calculations, the characteristic equation for the free-pinned plates is explicitly given by
\begin{equation}\label{eq:characteristic equation free-pinned}
d(U)=[T_{3}(I-T_2)^{-1}(1)](b)=0.
\end{equation}
Moreover, the explicit formula for the free- clamped plates is explicitly given by
\begin{equation}\label{eq:characteristic equation explicit}
d(U)=[(I-T_2)^{-1}(1)](b)=0.
\end{equation}
\forceindent In the upcoming sections, we aim to analyze the derived characteristic equations and solve them either numerically or analytically.

\section{Static stability analysis of free-pinned plates}
\forceindent Here we state the main result directly. The main result of this section shows that free-pinned plates are statically unstable and that is illustrated through the following theorem.
\begin{theorem}
The minimum solution to the characteristic equation \eqref{eq:characteristic equation free-pinned} is $U=0$. 
\end{theorem}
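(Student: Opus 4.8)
The plan is to evaluate the characteristic function $d(U)$ directly at $U=0$ and show it vanishes, after which the non-negativity of the flow speed finishes the argument. The crucial structural observation is that every operator $T_i$ is built from the aerodynamic coupling $W$, and $W$ carries an overall factor of $\mu=\rho_0 U^2 l/(EI\sqrt{1-M^2})$, since $W(f)=-2\mu\,\mathcal{T}(f)$. Consequently each $T_i$ degenerates to the zero operator as $U\to 0$, and this is precisely what forces $d(0)=0$.

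Concretely, I would first record that at $U=0$ we have $\mu=0$, so $W$ annihilates every argument; in particular $T_2$ is the zero operator, whence $I-T_2=I$ is trivially invertible and $(I-T_2)^{-1}(1)=1$. This also settles the only well-definedness concern: the expression $d(U)$ in \eqref{eq:characteristic equation free-pinned} makes perfect sense at $U=0$ because the inverse appearing there reduces to the identity. Next I would apply $T_3$ to the constant function $1$: since $[T_3(f)](y)=\int_{-b}^{y}(y-s)\,W(f)(s)\,ds$ and $W(1)=0$ at $U=0$, we obtain $T_3(1)=0$ identically, and in particular $[T_3(1)](b)=0$. Therefore $d(0)=[T_3(I-T_2)^{-1}(1)](b)=0$, so $U=0$ solves the characteristic equation.

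Finally, because $U$ denotes a physical free-stream flow speed it is constrained to $U\ge 0$, so every solution of \eqref{eq:characteristic equation free-pinned} is non-negative; having exhibited the solution $U=0$, it is automatically the minimum, which is the assertion of the theorem.

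As for difficulty, there is essentially no analytic obstacle here: the result follows from the fact that the aerodynamic loading $W$ scales with $U^2$ and therefore switches off at zero speed, collapsing $T_3$ to zero. The only points requiring care are bookkeeping ones: confirming that $(I-T_2)^{-1}$ is well-defined at $U=0$ (it is, being the identity, so the Fredholm/contraction machinery of the previous theorem is not even needed at this point), and invoking the physical restriction $U\ge 0$ that promotes \emph{a} solution to \emph{the minimum} solution. The genuine content of the statement is its interpretation rather than its proof: since the divergence speed is $U=0$, free-pinned plates admit no interval of positive flow speeds on which they remain stable, i.e.\ they are statically unstable for every $U>0$.
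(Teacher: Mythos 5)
Your proof is correct and follows essentially the same route as the paper's: at $U=0$ the aerodynamic operator $W$ (and hence $T_2$ and $T_3$) vanishes, so $(I-T_2)^{-1}(1)=1$ and $[T_3(1)](b)=0$, giving $d(0)=0$; minimality then follows from $U\ge 0$. Your version merely makes explicit the well-definedness of $(I-T_2)^{-1}$ at $U=0$ and the physical non-negativity of $U$, which the paper leaves implicit.
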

\begin{proof}
If $U=0$, then $T_2=T_3=0$. Therefore,
\begin{equation}\nonumber
[(I-T_2)^{-1}(1)](y)=1
\end{equation}
and consequently,
\begin{equation}\nonumber
[T_3(I-T_2)^{-1}(1)](b)=0
\end{equation}
and that completes the proof.
\end{proof}

\section{Static stability analysis of free-clamped plates}
\forceindent In the following discussion, we aim to analyze the characteristic equation \eqref{eq:characteristic equation explicit} analytically and numerically.
\subsection{analytical study}
In this subsection, we show that there exists a stability range for $U$ such that the characteristic equation does not have a solution.
\begin{theorem}
 For $\mu\leq\frac{2\varepsilon}{5\pi b^3}$ with $0<\varepsilon<1$, there exists no solution to the characteristic equation \eqref{eq:characteristic equation explicit}.
 \end{theorem}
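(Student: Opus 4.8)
The plan is to turn the stated smallness of $\mu$ into a Neumann-series estimate for the characteristic value $d(U)=[(I-T_2)^{-1}(1)](b)$. For $\mu$ in the given range, $T_2$ is a contraction on $C[-b,b]$, so by the solvability theorem $(I-T_2)^{-1}=\sum_{n=0}^{\infty}T_2^{n}$ and hence $d(U)=1+\sum_{n=1}^{\infty}[T_2^{n}(1)](b)$. The goal is to show that the correction series has absolute value strictly below $1$, so that $d(U)$ is bounded away from zero and the characteristic equation $d(U)=0$ admits no root on this $\mu$-interval.

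First I would pin down the leading term, which is exactly where the constant $\tfrac{5\pi b^{3}}{2}$ originates. Applying the Tricomi operator to the constant function and using $\sqrt{\tfrac{b+t}{b-t}}=\tfrac{b+t}{\sqrt{b^{2}-t^{2}}}$ together with the finite-Hilbert-transform identities $\int_{-b}^{b}\tfrac{dt}{\sqrt{b^{2}-t^{2}}}=\pi$ and $\mathrm{p.v.}\int_{-b}^{b}\tfrac{dt}{\sqrt{b^{2}-t^{2}}(t-x)}=0$ for $|x|<b$ gives $\mathcal{T}(1)(s)=\sqrt{\tfrac{b-s}{b+s}}$. Then $[T_2(1)](y)=-\mu\int_{-b}^{y}(y-s)^{2}\sqrt{\tfrac{b-s}{b+s}}\,ds$, and evaluating at $y=b$ through the Beta integral $\int_{-b}^{b}\tfrac{(b-s)^{5/2}}{(b+s)^{1/2}}\,ds=(2b)^{3}B(\tfrac12,\tfrac72)=\tfrac{5\pi b^{3}}{2}$ yields $[T_2(1)](b)=-\tfrac{5\pi b^{3}}{2}\mu$. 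Since the integrand is nonnegative and $(y-s)^{2}\le(b-s)^{2}$ on $[-b,y]$, the supremum over $y$ is attained at $y=b$, so $\|T_2(1)\|_{C[-b,b]}=\tfrac{5\pi b^{3}}{2}\mu$, which the hypothesis $\mu\le\tfrac{2\varepsilon}{5\pi b^{3}}$ bounds by $\varepsilon<1$.

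Next I would control the full tail. Granting the operator-norm bound $\|T_2\|_{C\to C}\le\tfrac{5\pi b^{3}}{2}\mu$ — that the constant function is the extremal input — each iterate satisfies $|[T_2^{n}(1)](b)|\le\|T_2\|^{n}\le(\tfrac{5\pi b^{3}}{2}\mu)^{n}$, so the correction $\big|\sum_{n\ge1}[T_2^{n}(1)](b)\big|$ is dominated by a geometric series in $\tfrac{5\pi b^{3}}{2}\mu\le\varepsilon<1$. Consequently $d(U)$ equals $1$ plus a strictly smaller correction and cannot vanish; I would then convert the admissible $\mu$-range back into a flow-speed interval via $\mu=\rho_0U^{2}l/(EI\sqrt{1-M^{2}})$ to obtain the stated stability region.

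The hard part will be the operator-norm estimate $\|T_2\|_{C\to C}\le\tfrac{5\pi b^{3}}{2}\mu$, since $T_2=-\mu\,\mathcal{L}\mathcal{T}$ carries the singular Tricomi operator, which is not positivity preserving, so the naive pointwise domination $|\mathcal{T}(f)|\le\|f\|_{C}\,\mathcal{T}(1)$ is false. I would attack it by interchanging the order of integration (Fubini) to write $[T_2 f](b)$ as a single integral of $f$ against the kernel $\sqrt{\tfrac{b+t}{b-t}}\,\mathrm{p.v.}\!\int_{-b}^{b}\tfrac{(b-s)^{5/2}}{(b+s)^{1/2}(t-s)}\,ds$, and then trying to show this kernel is single-signed on $(-b,b)$; if so, its $L^{1}$ norm equals its signed integral, namely $\tfrac{5\pi b^{3}}{2}\mu$, which identifies the operator norm. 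The genuinely delicate issue is bounding the interior iterates rather than just the final evaluation at $b$; absent a clean single-sign argument, I would fall back on a H\"older estimate built from the $L^{2+}\!\to\!L^{4/3-}$ boundedness of $\mathcal{T}$, accepting a larger constant that would restrict the range of $\varepsilon$.
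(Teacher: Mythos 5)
Your setup (the Neumann series for $(I-T_2)^{-1}$, the evaluation $\mathcal{T}(1)(s)=\sqrt{(b-s)/(b+s)}$, and the Beta-integral identification of the constant $\tfrac{5\pi b^3}{2}$) agrees with the paper, but the proposal has two gaps. The step you flag as hard --- the bound $\|\mathcal{L}\mathcal{T}\|_{C\to C}\le\tfrac{5\pi b^3}{2}$ --- is precisely where the paper's real work lies, and your single-sign idea is the right one but must be verified at \emph{every} evaluation point $y$, not only at $y=b$: the paper does this by computing the singular inner integral $f(t,y)=\int_{-b}^{y}\frac{(y-s)^2}{t-s}\sqrt{\frac{b-s}{b+s}}\,ds$ in closed form and checking it is positive on $[-b,b]\times[-b,b]$, and separately evaluating $\mathcal{H}_b\bigl((b-s)^2\sqrt{(b-s)/(b+s)}\bigr)(t)=\frac{1}{4}(2t-3b)^2+\frac{5b^2}{4}$ for the $y=b$ kernel. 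Without these explicit computations your estimate remains a plan, and your H\"older fallback would degrade the constant below what the theorem asserts.

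The more decisive problem is the final step. Even granting $\|T_2\|_{C\to C}\le\frac{5\pi b^3}{2}\mu\le\varepsilon$, the triangle inequality gives only $\bigl|\sum_{n\ge1}[T_2^n(1)](b)\bigr|\le\sum_{n\ge1}\varepsilon^n=\frac{\varepsilon}{1-\varepsilon}$, which is at least $1$ once $\varepsilon\ge 1/2$; so your geometric-series argument rules out roots only for $\mu<\frac{1}{5\pi b^3}$, half the claimed interval. The paper closes the full range by exploiting signs rather than magnitudes: since $T_2=-\mu\mathcal{L}\mathcal{T}$ and all iterates $[(\mathcal{L}\mathcal{T})^n(1)](y)$ are positive (and bounded by $c_n=[(\mathcal{L}\mathcal{T})^n(1)](b)$), the series $d(U)=\sum_{n\ge0}(-1)^nc_n\mu^n$ is alternating with term ratio $c_{n+1}\mu/c_n\le\frac{5\pi b^3}{2}\mu\le\varepsilon<1$, and the alternating-series remainder bound then gives $d(U)\ge 1-\frac{5\pi b^3}{2}\mu>0$ for every $\varepsilon<1$. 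Since you need the positivity of the kernel anyway to get your operator norm, you should reuse it to replace the geometric sum with this alternating-series estimate; as written, the proposal cannot prove the statement on the stated range of $\mu$.
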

\begin{proof}
We assume that  
$$[(I-T_2)^{-1}(1)](b)=\sum_{n=0}^{\infty}T^{n}_2 (1)(b)$$ whenever $\sum_{n=0}^{\infty}[T^{n}_2 (1)](b)$ is convergent. We will verify the use of the above formula in the upcoming discussion. The operator $T_2$ can be written as
$$T_2=-\mu \mathcal{L}\mathcal{T}.$$
Consequently, the characteristic equation can be rewritten in terms of the parameter $\mu$ as
\begin{equation}\nonumber
S(\mu)=\sum_{n=0}^{\infty}(-1)^n c_n \mu^n=0,
\end{equation}
where
$$c_n=[\left(\mathcal{L}\mathcal{T}\right)^{n}(1)](b).$$

Next, we show that $c_n=[\left(\mathcal{L}\mathcal{T}\right)^n(1)](b)$ and $[\left(\mathcal{L}\mathcal{T}\right)^n(1)](y),~|y|\leq b$ are positive for all $n\geq 1$. For $n=1$, we have
\begin{equation}\nonumber
[\left(\mathcal{L}\mathcal{T}\right)(1)](y)=\int_{-b}^{y} (y-s)^2\sqrt{\frac{b-s}{b+s}}\,ds>0.
\end{equation}
Therefore, we have $[\left(\mathcal{L}\mathcal{T}\right)(1)](y), \left([\mathcal{L}\mathcal{T}\right)(1)](b)>0$ and by induction, if $[\left(\mathcal{L}\mathcal{T}\right)^{n}(1)](y),[\left(\mathcal{L}\mathcal{T}\right)^{n} (1)](b)>0$, we have
\begin{equation}\nonumber
\begin{split}
[\left(\mathcal{L}\mathcal{T}\right)^{n+1}(1)](b)&=\frac{1}{\pi}\int_{-b}^{b} (b-s)^2\sqrt{\frac{b-s}{b+s}}\int_{-b}^{b}\sqrt{\frac{b+t}{b-t}}\frac{[\left(\mathcal{L}\mathcal{T}\right)^{n}(1)](t)}{t-s}\, dt \, ds\\
&=\int_{-b}^{b} \mathcal{H}_b\left((b-s)^2\sqrt{\frac{b-s}{b+s}}\right)(t)\sqrt{\frac{b+t}{b-t}}[\left(\mathcal{L}\mathcal{T}\right)^{n} (1)](t)\, dt\\
&=\int_{-b}^{b} \left(\frac{1}{4}(2t-{3b})^2+\frac{5b^2}{4}\right)\sqrt{\frac{b+t}{b-t}}[\left(\mathcal{L}\mathcal{T}\right)^{n} (1)](t)\,dt>0.
\end{split}
\end{equation}
$[\left(\mathcal{L}\mathcal{T}\right)^{n+1}(1)](y)$ can be written as 
\begin{equation}\label{eq:LT(y)}
\begin{split}
[\left(\mathcal{L}\mathcal{T}\right)^{n+1}(1)](y)&=\frac{1}{\pi}\int_{-b}^{y} (y-s)^2\sqrt{\frac{b-s}{b+s}}\int_{-b}^{b}\sqrt{\frac{b+t}{b-t}}\frac{[\left(\mathcal{L}\mathcal{T}\right)^{n}(1)](t)}{t-s}\,dt\, ds\\
&=\frac{1}{\pi}\int_{-b}^{b} \int_{-b}^{y} \frac{(y-s)^2}{t-s}\sqrt{\frac{b-s}{b+s}}ds\sqrt{\frac{b+t}{b-t}}[\left(\mathcal{L}\mathcal{T}\right)^{n}(1)](t)\, dt>0\\
\end{split}
\end{equation}
 
as the integral 
\begin{equation}\nonumber
\int_{-b}^{y} \frac{(y-s)^2}{t-s}\sqrt{\frac{b-s}{b+s}}\, ds
\end{equation}
is given explicitly by
\begin{equation}\nonumber
\begin{split}
f(t,y)&=\sqrt{\frac{b-t}{b+t}}(y-t)^2\ln\left(\frac{\sqrt{\frac{b-y}{b+y}}+\sqrt{\frac{b-t}{b+t}}}{\sqrt{\frac{b-y}{b+y}}-\sqrt{\frac{b-t}{b+t}}}\right)\\
&+\left(2y^2+4(b-t)y+2t^2-2bt+b^2\right)\left(\frac{\pi}{2}-\arctan\left(\sqrt{\frac{b-y}{b+y}}\right)\right)\\
&+b\frac{(4y-2t+3b)\left(\frac{b-y}{b+y}\right)^{3/2}+(4y-2t+b)\sqrt{\frac{b-y}{b+y}}}{2\left(\frac{b-y}{b+y}\right)+\left(\frac{b-y}{b+y}\right)^2+1}
\end{split}
\end{equation}
which is positive for all $|t|\leq b$ and $|y|\leq b$. We verified that $c_n$ is positive and additionally, using \eqref{eq:LT(y)}, we have $[\left(\mathcal{L}\mathcal{T}\right)^{n}(1)](y)$ is an increasing positive function of $y$ with a maximum value $c_n$. Therefore, the series $S(\mu)$ is an alternating series. Next, we estimate a bound for the ratio $\frac{c_{n+1}\mu^{n+1}}{c_n\mu^{n}}$.

\begin{equation}\nonumber
\begin{split}
\frac{c_{n+1}\mu^{n+1}}{c_n\mu^{n}}&={\mu}\frac{c_{n+1}}{c_n}\\
&=\frac{\mu}{c_n}{\int_{-b}^{b} \left(\frac{1}{4}(2t-{3b})^2+\frac{5b^2}{4}\right)\sqrt{\frac{b+t}{b-t}}[T^n_2(1)](t)\, dt}\\
&\leq \mu{\int_{-b}^{b} \left(\frac{1}{4}(2t-{3b})^2+\frac{5b^2}{4}\right)\sqrt{\frac{b+t}{b-t}}\, dt}\\
\end{split}
\end{equation}
and from the above estimate, we impose that 
\begin{equation}\label{eq:bound on mu}
\mu \leq\frac{\varepsilon}{{\int_{-b}^{b} \left(\frac{1}{4}(2t-{3b})^2+\frac{5b^2}{4}\right)\sqrt{\frac{b+t}{b-t}}\, dt }}=\frac{2\varepsilon}{5\pi b^3},
\end{equation}
where $0<\varepsilon<1$. Consequently, we have that the sequence $c_n \mu^n$ is decreasing and approaching zero. Therefore, $S(\mu)$ is convergent by the alternating series test and additionally, it is absolutely convergent by the ratio test. In fact, with the above bound on $\mu$, we have $\sum_{n=0} ^\infty (-1)^n [(\mathcal{L}\mathcal{T})^n(1)](y)\mu^n$  is absolutely convergent for all $|y|\leq b$ using the direct comparison test with the series $\sum_{n=0}^\infty c_n \mu^n$ as we showed previously that $[(\mathcal{L}\mathcal{T})^n(1)](y)<c_n$. Therefore, the inversion formula $[(I-T_2)^{-1}(1)](y)=\sum_{n=0}^{\infty} [T^{n}_2(1)](y)$ is well-defined given that $\mu$ satisfies the bound  \eqref{eq:bound on mu}. The second term of the partial sum sequence of $S(\mu)$ is
\begin{equation}\nonumber
S_2(\mu)=1-\mu \frac{5\pi b^3}{2}>0
\end{equation}
due to the bound \eqref{eq:bound on mu}. Therefore, $S(\mu)>0$ as the term $c_n \mu^n$ is always decreasing by \eqref{eq:bound on mu}. 
\end{proof}
Based on the above theorem, and by assuming that $\varepsilon\rightarrow 1^{-}$ to maximize the convergence interval of $\mu$, we have the following static stability flow velocity range for the free-clamped plates. 
\begin{equation}\nonumber
0<U<(1-M^2)^{1/4}\sqrt{\frac{2 EI}{5\pi\rho_0 l b^3}}.
\end{equation}
\subsection{numerical study}
In this section, we aim to solve the characteristic equation $(I-T_2)^{-1}(1)(b)=0$ numerically. In other words, we want to find a numerical solution to the integral equation
\begin{equation}\label{eq:integral equation to be solved numerically}
f(y)+\mu\int_{-b}^{y}(y-s)^2[\mathcal{T}(f)](s)\, ds=1~~,~~|y|\leq b
\end{equation}
 satisfying the boundary condition
\begin{equation}\nonumber
f(b)=0.
\end{equation}
If a continuous solution to the integral equation exists satisfying the boundary condition, then using the Weierstrass theorem, the solution can be approximated using a polynomial. Therefore, we assume the following polynomial approximate solution
\begin{equation}\nonumber
P_n(y)=\sum_{i=0}^n a_i y^i
\end{equation}
with $P_n(b)=0$, where $n$ is the order of the polynomial. Due to \eqref{eq:integral equation to be solved numerically}, we also impose that $P_n(-b)=1$. After that, we define the error function $e(y)$ given by 
\begin{equation}\nonumber
e(y)=P_{n}(y)+\mu\int_{-b}^{y}(y-s)^2[\mathcal{T}(P_{n})](s)\, ds-1~~,~~|y|\leq b.
\end{equation}
To enable numerical computations, We try to satisfy the integral equation for a finite number, denote it by $m$, of points in the interval $[-b,b]$. Therefore, the interval $[-b,b]$ by a partitioning $-b\leq y_1\leq y_2\leq...\leq y_m\leq b$. Then, the coefficients $a_i$ and the parameter $\mu$ are obtained numerically by solving the following optimization problem.

\begin{equation}\label{eq:optimization problem}
 \min_{\mu,a_1,..,a_n}\max_{y_1,...,y_i,...,y_m} |e(y_i)|,~P_n(-b)=1,~P_n(b)=0.
\end{equation}
It is sufficient to solve the optimization problem for the case $b=1$ only. To illustrate this point, equation \eqref{eq:integral equation to be solved numerically} is non-dimensionalized as the following. Using the substitution $\tilde{y}=y/b$, equation \eqref{eq:integral equation to be solved numerically} can be written as 

\begin{equation}\nonumber
\tilde{f}(\tilde{y})+\mu b^3 \int_{-1}^{\tilde{y}}(\tilde{y}-s)^2[\tilde{\mathcal{T}}(\tilde{f})](s)\, ds=1~~,~~|\tilde{y}|\leq 1,
\end{equation}
where
$$\tilde{f}(\tilde{y})=f(b\tilde{y})$$
and
$$
\tilde{\mathcal{T}}(f(t))(x)=\frac{1}{\pi}\sqrt{\frac{1-x}{1+x}}\int_{-1}^{1}\sqrt{\frac{1+t}{1-t}}\frac{f(t)}{t-x}\,dt,~ |x|\leq 1.
$$
It can be seen from the transformed integral equation above that the value of $\mu$ obtained from the optimization problem \eqref{eq:optimization problem} by sitting $b=1$ is equal to $b^3\mu$. 
The optimization problem is solved using the FMINSEARCH tool in MATLAB. We ran the numerical computations for polynomials of orders from 2 to 8. The numerical computations show that the value of $b^3\mu$ approaches a value approximately equal to ${0.23}$ (see figure \eqref{fig:mu}) where the bound of $b^3\mu$ obtained in the analytical study is approximately ${0.13}$ and that is almost $56\%$ difference. Additionally, it can be seen from figure \eqref{fig:error}  that $\max_{y_i} |e(y_i)|$  decreases as the order of the polynomial increases with a minimum value approached approximately equal to $2.0\times 10^{-5}$ . It can be seen from figure \eqref{fig:solution profile} that the solution profile is captured starting from the second order approximation and as the polynomial order increases, the change in the approximate solution profile is very minimal. A minimization over the parameters $a_i$ only shows that $\max|e(y_i)|$ reaches its minimum value when $b^3\mu\approx 0.23$ (see figure \eqref{fig:minmu}) and that indicates that $\mu \approx 0.23/b^3$ is the minimum value such that the integral equation \eqref{eq:integral equation to be solved numerically} has a solution satisfying the boundary condition. Therefore, we can use $\mu \approx 0.23/b^3$ to obtain an approximate formula for the divergence speed which is given by
\begin{equation}\nonumber
U_{div}\approx(1-M^2)^{1/4}\sqrt{\frac{0.23 EI}{\rho_0 l b^3}}.
\end{equation}

 \begin{figure}
\includegraphics[width=5in ,keepaspectratio=true]{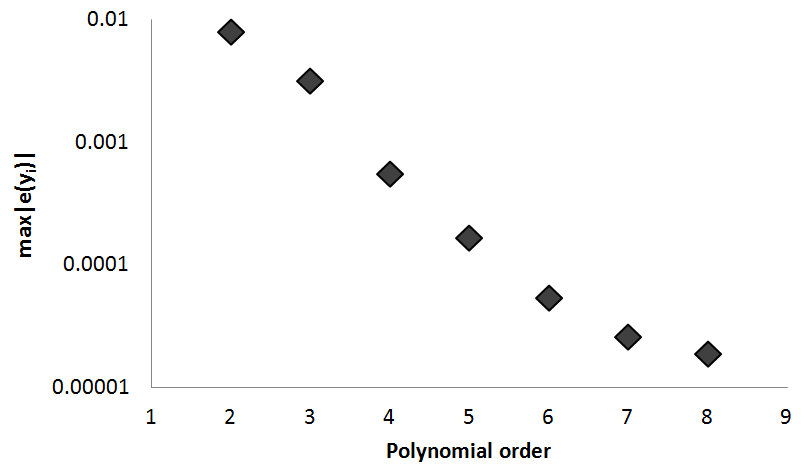}
\centering
\caption{ calculated value of $\max|e(y_i)|$ after solving the optimization problem \eqref{eq:optimization problem} for polynomial orders 2- 8.}
\label{fig:error}
\end{figure}

\begin{figure}
\includegraphics[width=5in ,keepaspectratio=true]{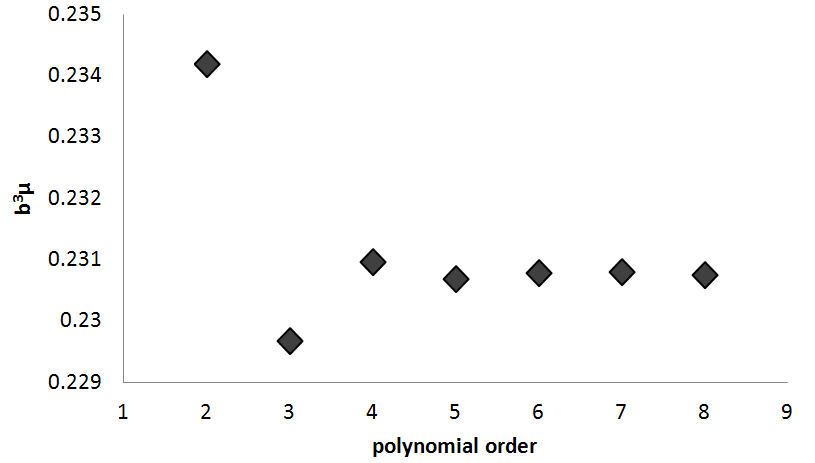}
\centering
\caption{calculated values of $b^3\mu$} for polynomial orders 2-8.
\label{fig:mu}
\end{figure}

 \begin{figure}
\includegraphics[width=5in ,keepaspectratio=true]{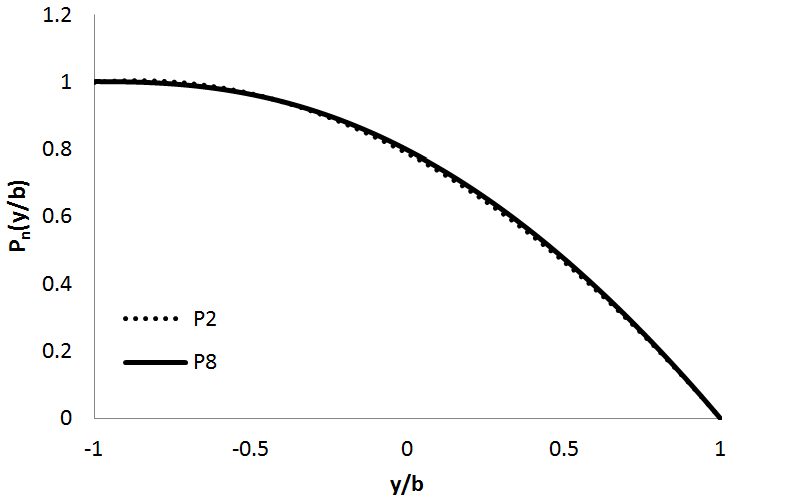}
\centering
\caption{approximate solutions profiles for  polynomials of order 2 and 8}
\label{fig:solution profile}
\end{figure}

 \begin{figure}
\includegraphics[width=5in ,keepaspectratio=true]{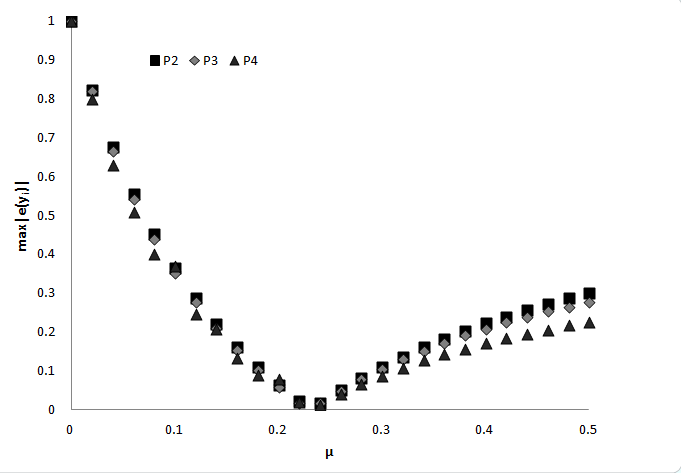}
\centering
\caption{ calculated value of $\max_{a_1,...,a_n}|e(y_i)|$ after optimizing the coefficients $a_i$ for different values of $\mu$ between 0 and 0.5 with b=1 and polynomial orders of 2, 3, and 4.}
\label{fig:minmu}
\end{figure}
 
\begin{remark} The framework presented in this paper can be used to study the static stability of thin plates with different boundary conditions (for example: clamped-clamped, pinned-pinned,etc). Following the same approach presented in the previous sections, the same characteristic equation \eqref{eq:characteristic equation} can be used but with different values of $\mathbf{P}$, $\mathbf{Q}$, and $\mathbf{u}$ that depend on the plate boundary conditions. Characteristic equations of plates with different boundary conditions  can then be obtained explicitly and analyzed analytically or numerically. In case of numerical treatment, the characteristic equations can be non-dimensionalized as illustrated in the previous discussion. Then, the approximate solutions to the non-dimensionalized characteristic equations can then used to obtain explicit formulas for the divergence speed.  
\end{remark} 

\begin{remark}[Comparison with some earlier works]
 In \cite{static stability 2}, the static stability of clamped-clamped plates in axial flow are studied analytically, numerically, and experimentally. The analytical treatment was based on the Galerkin method to approximate the profile of the plate deflection and the axial flow is assumed to be two dimensional and  potential. An analytical formula is then obtained and its accuracy is compared with numerical 3D simulations and experimentation. According to \cite{static stability 2}, the derived formula deviates from the numerical and experimental results by a factor of $2$, and it was proposed by the authors of that work to consider 3D models in order to derive accurate formulas of the divergence speed.
 
 In \cite{static instability 1}, the problem of axial potential flow over a pinned-pinned plate under tension is studied. The problem was formulated analytically resulting in an eigenvalue problem associated with solving an integral-differential equation. The eigenvalue problem was solved using the Galerkin method to approximate the eigenvalue problem and implementing numerical methods to solve the approximate  problem. The divergence speed is then plotted against different parameters of the aeroelastic problem to understand their effect on the divergence speed.
 
In comparison with the analytical and numerical treatments mentioned in the above works, we have the following comments. The analysis in this work covers subsonic compressible  flow and not only incompressible potential flow and therefore, the framework of our work is more general. Additionally, in contrast to these works which employ the Galerkin method, we retain the continuum model directly without discretizing the equations. Moreover, there is an advantage of using our framework to derive explicit formulas of the divergence speed. Even if the derived characteristic equations,  based on our framework, are solved numerically, explicit formulas of the divergence speed can be obtained as was the case in solving \eqref{eq:characteristic equation explicit} above.
 
 Although, we assume the flow equations to be two dimensional and the plate equation to be one dimensional, this framework can be a starting point and a basis for developing more sophisticated frameworks to analyze the static stabilities of thin structures in axial flows more accurately. 
\end{remark}

\section{Static stability of free-clamped piezoelectric flags}
There has been a recent trend of harvesting energy using piezoelectric flags by implementing them in axial flows. Therefore, it is important to predict the speed at which static instabilities of these flags may occur. Additionally, divergence speed of piezoelectric flags can be used as an upper bound of the flutter speed as in practice, divergence speed is larger than the flutter speed.
For piezoelectric plates, the  continuous models for the internal moment $\mathcal{M}$ and the charge transfer $Q$  are given by \cite{piezo electric continuous model}
\begin{equation}\nonumber
\mathcal{M}=EI h_{yy}-\mathcal{X}V,
\end{equation}
and
\begin{equation}\nonumber
Q=cV+\mathcal{X}h_{yy},
\end{equation}
where $c$ is the capacitance of the piezoelectric flag and $\mathcal{X}$ is a coupling term. Keeping the same assumption for the aerodynamic forces, the beam equation for the piezoelectric flag is 
\begin{equation}\nonumber
EI h_{yyyy}-\mathcal{X}V_{yy}=W(h_{y}).
\end{equation}
 If we assume zero charge transfer to the piezoelectric material and no energy dissipation inside the piezoelectric material, the momentum balance is reduced to be
 \begin{equation}\nonumber
 (EI+\frac{\mathcal{X}^2}{c}) h_{yyyy}=W(h_{y})
 \end{equation}
and for free-clamped piezoelectric flags, the boundary conditions are identical to the boundary conditions of the conventional free-clamped plates. Consequently, we have the free-pinned piezoelectric plates are statically unstable as in the case of conventional free-pinned plate. Additionally, the static stability flow velocity range for free-clamped piezoelectric flags is given by
\begin{equation}\label{eq:1}
0<U<(1-M^2)^{1/4}\sqrt{\frac{2(EI+\mathcal{X}^2/{c})}{5\pi\rho_0 l b^3}}.
\end{equation}
Moreover, the divergence speed of free-clamped piezoelectric flags is approximately given by
\begin{equation}\label{eq:2}
U_{div}\approx(1-M^2)^{1/4}\sqrt{\frac{0.23 (EI+\mathcal{X}^2/{c})}{\rho_0 l b^3}}.
\end{equation}

It is noticed from equations \eqref{eq:1} and \eqref{eq:2} that the piezoelectric coupling has a stabilizing effect as the equivalent bending stiffness $EI+\mathcal{X}^2/{c}$ increases with the piezoelectric coupling. Therefore, we propose implementing piezoelectric control to stabilize thin structures if the assumptions mentioned in the above discussion can be implemented physically.     
 
\section{Conclusion}
In this work, we analyze the static stability of plates with fixed trailing edges in subsonic axial air flow. We couple the deformation of the plate with the airflow using a singular integral equation, also known as the Possio integral equation, and then embed its steady state solution in the plate equation. Next, we verify the solvability of the static aeroelastic equations,while neglecting the boundary conditions, using tools from functional analysis. Then, we derive explicit formulas of the characteristic equations of free-clamped and free-pinned plates from which the divergence speed can be obtained. We show analytically that free-pinned plates are statically unstable as the divergence speed is zero. After that, we move to derive an analytic formula for the flow speeds  that correspond to static stability regions for free-clamped plates. We also resort to numerical computations to obtain an explicit formula for the divergence speed of free-clamped plates. Finally, we apply the obtained results on piezoelectric plates and we show that free-clamped piezoelectric plates are statically more stable than conventional free-clamped plates due to the piezoelectric coupling.

\end{document}